\newtheorem{theorem}{Theorem}
\theoremstyle{remark}
\newtheorem*{remark}{Remark}
\newcommand{\mf}{\mathbf}
\newcommand{\bl}[1]{\textcolor[rgb]{0,0,0}{#1}}
\newcommand{\rd}[1]{\textcolor[rgb]{0,0,0}{#1}}
\newcommand{\pp}[1]{\textcolor[rgb]{0,0,0}{#1}}
\newcommand{\ff}[1]{\textcolor[rgb]{0,0,0}{#1}}
\begin{document}
%
\title{Harnessing Holes for Spatial Smoothing with Applications in Automotive Radar}

\author{\IEEEauthorblockN{Yinyan Bu, Robin Rajam\"{a}ki, Pulak Sarangi, and Piya Pal}\\
\IEEEauthorblockA{Department of Electrical and Computer Engineering, University of California, San Diego}
}

\maketitle

\IEEEpeerreviewmaketitle
\begin{abstract}
    \bl{This paper studies spatial smoothing using sparse arrays in single-snapshot Direction of Arrival (DOA) estimation. We consider the application of automotive MIMO radar, which traditionally synthesizes a large uniform virtual array by appropriate waveform and physical array design. }
   We \bl{explore} deliberately introducing holes into this virtual array to leverage resolution gains provided by the increased aperture. The presence of these holes requires re-thinking DOA estimation, as conventional algorithms may no longer be easily applicable \bl{and alternative techniques, such as array interpolation, may be computationally expensive}. 
   Consequently, we study sparse array geometries that permit the direct application of spatial smoothing. 
   \bl{We show that a sparse array geometry is amenable to spatial smoothing if it can be decomposed into the sum set of two subsets of suitable cardinality. Furthermore, we demonstrate that many such decompositions may exist---not all of them yielding equal identifiability or aperture. } 
   We derive necessary and sufficient conditions to guarantee identifiability \bl{of a given number of targets}
   , which gives insight into choosing desirable decompositions \bl{for spatial smoothing}. This \bl{provides} uniform recovery guarantees and enables estimating DOAs \bl{at increased resolution and reduced} computational complexity.\footnote{This work was supported in part by \pp{Texas Instruments and} grants ONR N00014-19-1-2256, NSF 2124929, and DE-SC0022165.} 
\end{abstract}

\section{Introduction}
Active sensing provides several advantages compared to passive sensing by virtue of the flexibility of designing the transmitted waveforms for probing the environment. A prominent example of an active sensing system is multiple-input multiple-output (MIMO) radar, which is capable of achieving high angular resolution using only a limited number of physical sensors. By combining transmission of orthogonal waveforms and design of structured sparse array geometries, MIMO radar can synthesize a virtual array (also known as the sum co-array) with $M_t M_r$ elements using $M_t$ transmit and $M_r$ receive antennas. Sparse array geometries, such as the minimum-redundancy array \cite{moffet1968minimum} and nested array \cite{pal2010nested}, yield a large virtual uniform linear array (ULA) which naturally leads to enhanced resolution. Hence, sparse array-based MIMO radar has become extremely lucrative for deployment in domains with high resolution requirements, such as advanced driver assistant systems (ADAS) and autonomous vehicles. However, a key challenge imposed by automotive applications is that the environment is highly dynamic with rich multipath \cite{patole2017automotive,sun20214d}. As a result, the number of snapshots in a given coherence interval is small; in the worst case, only a single snapshot is available. This raises a fundamental question ``How can super-resolution direction of arrival (DOA) estimation be achieved using a single snapshot?"

A common practice in the MIMO radar literature \cite{li2007mimo} is to synthesize a virtual array with a large contiguous ULA segment of up to $O(M^2)$ virtual sensors, when $M_t \propto M_r \propto M$. The inherent ``shift invariant" structure of the (virtual) ULA can be leveraged to identify the ``signal subspace'' of interest from the single-snapshot measurement model. In particular, spatial smoothing \cite{odendaal1994two,pillai1989forward} can be used to accumulate multiple partial measurement vectors (corresponding to appropriate sub-arrays of the ULA) to build a spatially smoothed measurement matrix that is no longer rank-deficient. High-resolution DOA estimation can then be achieved by applying subspace-based methods, such as MUSIC \cite{schmidt1986multiple} and ESPRIT \cite{roy1989esprit}, on the spatially smoothed measurement matrix. The aperture of the virtual array can be further extended by making it a sparse array, which inevitably leads to the introduction of holes (given the same sensor budget). This requires re-thinking the application of conventional algorithms \bl{leveraging the structure of the ULA}. One way to tackle these holes is the recently proposed interpolation techniques based on low-rank Toeplitz or Hankel matrix completion \pp{\cite{sun2020sparse,sun20214d,sarangi2022single}} to obtain an interpolated virtual ULA. 
Upon successful interpolation, standard \bl{ULA-based} spatial smoothing \bl{is} applicable. However, there are two major challenges associated with \pp{such} interpolation techniques. Firstly, obtaining theoretical guarantees even in absence of noise for successful virtual array interpolation is difficult for arbitrary array geometries; secondly, the computational cost of interpolation via rank-minimization (or corresponding convex relaxations) can be \bl{prohibitively} high. 
A natural question is \pp{therefore}: \bl{How can interpolation-free methods, such as spatial smoothing, be applied on sparse (virtual) arrays while guaranteeing the identifiability of a desired number of targets?}

\textbf{Contributions:}
 This paper explores 
 synthesizing (virtual) \bl{sparse} arrays 
 to enhance resolution 
 \bl{compared} to conventional uniform (virtual) arrays. 
 \bl{We characterize the set of sparse array geometries amenable to spatial smoothing, 
 \pp{establishing} that several decompositions may exist for a given array geometry and spatial smoothing parameter values, but some decompositions may be preferable to others.} We \bl{derive} necessary and sufficient conditions for \bl{uniquely} identifying $K$ targets (in absence of noise) 
 \bl{using} spatial smoothing on \bl{these} sparse arrays. 
 \bl{We demonstrate that leveraging holes in spatial smoothing using appropriately designed sparse (virtual) arrays }
 can \bl{improve resolution without introducing ambiguities.}
 
\textbf{Notation:} Given an array geometry $\mathbb{S}=\{d_1,d_2,\cdots,d_N\}$, matrix $\mathbf{A}_{\mathbb{S}}(\bm{\theta}) \in \mathbb{C}^{N\times K}$ denotes the array manifold for sensors located at $n\lambda/2$, where $n\in \mathbb{S}$. \bl{The $(n,k)$th entry of $\mathbf{A}_{\mathbb{S}}(\bm{\theta})$ is} $[\mathbf{A}_{\mathbb{S}}(\bm{\theta})]_{n,k}=\exp(j\pi d_n \sin\theta_k)$, where $\bm{\theta}\in[-\pi/2,\pi/2)^K$ denote the target DOAs. 
We use the first sensor as the reference sensor ($d_1=0$). 
The Khatri-Rao \bl{(column-wise Kronecker)} product is denoted by $\odot$. Moreover, $\mathcal{R}(\cdot)$ and $\mathcal{N}(\cdot)$ denote the range (column space) and null space, respectively.

\section{Measurement Model}
Consider $K$ narrowband sources impinging on \pp{a} linear array $\mathbb{S}=\{d_1,d_2,\cdots,d_N\}$ from distinct angular directions $\bm{\theta}=[\theta_1, \theta_2, \cdots, \theta_K]^T$. \pp{A single temporal} snapshot \pp{of the} received signal is of the following form: 
\begin{equation}\label{eqn:full_measurement}
\mathbf{y}=\bm{A}_{\mathbb{S}}(\bm{\theta})\bm{x} + \mathbf{n},
\end{equation}
where \bl{$\bm{x}\in\mathbb{C}^K$} is the source/target signal and \bl{$\mathbf{n}\in\mathbb{C}^{N}$} is a noise vector. The goal is to estimate $\{\theta_k\}_{k=1}^K$ \bl{given} $\mathbf{y}$ \bl{and $\mathbb{S}$}.

Note that \eqref{eqn:full_measurement} is applicable to both passive and active sensing---indeed, $\mathbb{S}$ can represent either a physical or virtual array. In case of co-located MIMO radar using orthogonal waveforms, \bl{which is the focus of this paper,} $\mathbb{S}=\mathbb{S}_t+\mathbb{S}_r$ is a sum co-array, where $\mathbb{S}_t$ and $\mathbb{S}_r$ are the transmitter and receiver arrays, \bl{respectively}. For ease of exposition, we restrict ourselves to non-redundant arrays, resulting in $\vert \mathbb{S}\vert=M_tM_r$, \bl{where $M_t\triangleq |\mathbb{S}_t|$ and $M_r\triangleq |\mathbb{S}_r|$}. 

\section{Spatial Smoothing with Sparse Arrays}

Spatial-smoothing using ULAs has been widely used for DOA estimation in sample-starved regimes or to tackle coherent sources \cite{shan1985spatial,pillai1989forward}. \pp{While it has been recognized that non-uniform arrays with a suitable shift-invariant structure can be used for spatial smoothing \cite{friedlander1992direction,wang19982} (or directly in algorithms such as ESPRIT \cite{xu2023coprime}),
the principled design of such ``spatial-smoothing-amenable'' sparse linear arrays geometries providing enhanced aperture with rigorous identifiability guarantees has not yet been fully explored.} 
\pp{In this section, we study spatial smoothing using sparse subarrays, developing new results on source/target identification which show that holes can be introduced in the array to increase aperture without compromising identifiability.}

An array $\mathbb{S}$ is said to have a shift-invariant structure if $\mathbb{S}$ contains shifted copies of a \bl{so-called ``basic sub-array''} $\mathbb{S}_b\subset\mathbb{S}$. Mathematically: $\bigcup_{i=1}^{L} (\mathbb{S}_b+\delta_i) \subseteq \mathbb{S}$, where $\delta_i\rd{\in\mathbb{Z}}$ denotes the $i$-th \rd{(unique integer-valued)} shift, $i\in\{1,2,\cdots,L\}$. 
We define the set of all $N$-sensor linear arrays that are amenable for (forward) spatial smoothing with parameters $N_s$ and $L$ as:
\begin{equation}
\begin{aligned}    
\mathcal{S}_{N}(N_s,L)\triangleq
\{&\pp{\mathbb{S},\vert\mathbb{S}\vert=N \text{ such that } \exists\ \mathbb{S}_b,\mathbb{S}_c \text{ with}}\\ &\pp{\vert\mathbb{S}_{b}\vert=N_s,\lvert\mathbb{S}_c\rvert=L,\mathbb{S}_b+\mathbb{S}_c \subseteq \mathbb{S}  \}.}
\end{aligned}\label{def:set}
\end{equation}
\pp{In other words,} set $\mathcal{S}_{N}(N_s,L)$ represents all linear arrays $\mathbb{S}$ that constitute of $L$ sub-arrays $\{\mathbb{S}_i\}_{i=1}^L=\{\mathbb{S}_b+\delta_i\}_{i=1}^L$, which may or may not be overlapping, that are integer-shifted copies of a basic sub-array $\mathbb{S}_b$ with $N_s$ sensors. Note that the same array $\mathbb{S}$ may belong to both $\mathcal{S}_{N}(N_1,L_1)$ and $\mathcal{S}_{N}(N_2,L_2)$ with $N_1\neq N_2$ or $L_1 \neq L_2$, i.e., there could exist multiple decompositions for the same $\mathbb{S}$. However, not all of them are equivalent in terms of identifiability and resolution, as we will show in \bl{\cref{sec:decomposition_considerations}}. 

If $\mathbb{S}\in \mathcal{S}_{N}(N_s,L)$, we can construct a spatially smoothed measurement matrix $\mathbf{Y}$ by rearranging measurement vector $\mathbf{y}$ into an $N_s\times L$ matrix as follows:
\begin{equation}
    \mathbf{Y}=[\bm{y}_1,\bm{y}_2,\cdots,\bm{y}_L],
\end{equation}
where $\bm{y}_i$ contains the elements of $\mathbf{y}$ corresponding to subarray $\mathbb{S}_i$. Specifically, in the absence of noise ($\mathbf{n}=\mathbf{0}$), $\bm{y}_i=\bm{A}_{\mathbb{S}_i}(\bm{\theta})\bm{x}=\bm{A}_{\mathbb{S}_b}(\bm{\theta})\bm{D}_i(\bm{\theta})\bm{x}$, 
where $\bm{D}_i(\bm{\theta})$ is a diagonal matrix with $[\bm{D}_i(\bm{\theta})]_{m,m}=\exp(j\pi\delta_i\sin\theta_m)$. Due to the shift-invariance property of the array, the spatially smoothed measurement matrix permits the following decomposition (when $\mathbf{n}=\mathbf{0}$):
\begin{align}
\mathbf{Y}&=\bm{A}_{\mathbb{S}_b}(\bm{\theta})[\bm{D}_1(\bm{\theta})\bm{x},\bm{D}_2(\bm{\theta})\bm{x},\cdots,\bm{D}_{L}(\bm{\theta})\bm{x}]\nonumber\\
&=\bm{A}_{\mathbb{S}_b}(\bm{\theta})\text{diag}(\bm{x})\bm{A}_{\mathbb{S}_c}(\bm{\theta})^{T}. \label{eqn:ss_op}
\end{align}
\pp{In the} presence of noise, $\mathbf{Y}$ has an additive term, where noise vector $\mathbf{n}$ is reshaped \pp{according to the shift structure.}

\cref{eqn:ss_op} illustrates that the shift-invariant structure of the array 
$\mathbb{S}$ can \bl{be leveraged} to \bl{potentially} build the rank of $\mathbf{Y}$ on which subspace methods can be applied to identify $\bm{\theta}$. The following \lcnamecref{theorem:1} provides necessary and sufficient conditions for identifying the desired subspace corresponding to the true DOAs $\bm{\theta}$ by applying MUSIC on $\mf{Y}$ \pp{using a sparse subarray $\mathbb{S}_b$ with holes}. \bl{When these conditions are satisfied, the} MUSIC pseudo-spectrum yields exactly $K$ peaks, and no false peaks.
\begin{theorem}\label{theorem:1}
Consider the measurement model \eqref{eqn:full_measurement} with $\mathbf{n}=\mathbf{0}$, suppose $\mathbb{S}\in \mathcal{S}_{N}(N_s,L)$. Applying MUSIC on $\mathbf{Y}$ in \eqref{eqn:ss_op} can resolve any $K < \pp{\min(N_s,L+1)}$ distinct angles $\{\theta_k\}_{k=1}^K$ unambiguously if and only if both of the following conditions hold: \\
(a) $\bm{A}_{\mathbb{S}_b}(\bm{\phi})\in \mathbb{C}^{N_s\times (K+1)}$ satisfies rank$(\bm{A}_{\mathbb{S}_b}(\bm{\phi}))=K+1$ for all possible sets of $K + 1$ distinct $\phi_i$ in $[-\frac{\pi}{2},\frac{\pi}{2})$; \\
(b) $\bm{A}_{\mathbb{S}_c}(\bm{\vartheta})\in\mathbb{C}^{L\times K}$ satisfies $rank(\bm{A}_{\mathbb{S}_c}(\bm{\vartheta}))=K$ for all possible sets of $K$ distinct $\vartheta_i$ in $[-\frac{\pi}{2},\frac{\pi}{2})$.
\end{theorem}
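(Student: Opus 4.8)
\emph{Approach.} The plan is to convert the operational claim about the MUSIC pseudo-spectrum into linear-algebraic statements about $\mathcal{R}(\mf{Y})$ and the steering vectors $\bm{a}_{\mathbb{S}_b}(\phi)$, and then match those to (a)--(b). Recall that (in the noiseless case) MUSIC estimates the signal subspace as $\mathcal{U}=\mathcal{R}(\mf{Y})$ with model order $\dim\mathcal{U}$, and the pseudo-spectrum $1/\|\Pi_{\mathcal{U}^\perp}\bm{a}_{\mathbb{S}_b}(\phi)\|^2$ has a pole (peak) exactly at those $\phi\in[-\tfrac{\pi}{2},\tfrac{\pi}{2})$ with $\bm{a}_{\mathbb{S}_b}(\phi)\in\mathcal{R}(\mf{Y})$. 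Using the factorization \eqref{eqn:ss_op}, $\mf{Y}=\bm{A}_{\mathbb{S}_b}(\bm{\theta})\,\mathrm{diag}(\bm{x})\,\bm{A}_{\mathbb{S}_c}(\bm{\theta})^{T}$, and the standing assumption that all target amplitudes are nonzero (so $\mathrm{diag}(\bm{x})$ is invertible), one has $\mathcal{R}(\mf{Y})\subseteq\mathcal{R}(\bm{A}_{\mathbb{S}_b}(\bm{\theta}))$ always. Hence I claim: ``MUSIC returns exactly $K$ peaks, at $\bm{\theta}$, with no false peaks'' is equivalent to (i) $\rank(\mf{Y})=K$ --- which, combined with $\dim\mathcal{R}(\bm{A}_{\mathbb{S}_b}(\bm{\theta}))\le K$ and the inclusion above, forces $\mathcal{R}(\mf{Y})=\mathcal{R}(\bm{A}_{\mathbb{S}_b}(\bm{\theta}))$ and thus makes every $\theta_k$ a peak --- together with (ii) $\bm{a}_{\mathbb{S}_b}(\phi)\notin\mathcal{R}(\bm{A}_{\mathbb{S}_b}(\bm{\theta}))$ for every $\phi$ distinct from all $\theta_k$. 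I would state this reduction as a short lemma and then prove both implications of the theorem through it.

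\emph{Sufficiency.} Assume (a) and (b). Applying (a) to the $K$ true angles augmented by one further distinct angle yields $\rank(\bm{A}_{\mathbb{S}_b}(\bm{\theta}))=K$ (full column rank), and applying (b) to $\bm{\theta}$ gives $\rank(\bm{A}_{\mathbb{S}_c}(\bm{\theta})^{T})=K$. Since $\mathrm{diag}(\bm{x})$ is invertible, $\mf{Z}\triangleq\mathrm{diag}(\bm{x})\bm{A}_{\mathbb{S}_c}(\bm{\theta})^{T}$ has full row rank $K$, and since $\bm{A}_{\mathbb{S}_b}(\bm{\theta})$ is injective, $\rank(\mf{Y})=\rank(\bm{A}_{\mathbb{S}_b}(\bm{\theta})\mf{Z})=\rank(\mf{Z})=K$; this is (i), and it gives $\mathcal{R}(\mf{Y})=\mathcal{R}(\bm{A}_{\mathbb{S}_b}(\bm{\theta}))$. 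For (ii), if some $\phi\notin\{\theta_k\}$ had $\bm{a}_{\mathbb{S}_b}(\phi)\in\mathcal{R}(\bm{A}_{\mathbb{S}_b}(\bm{\theta}))$, then $[\bm{A}_{\mathbb{S}_b}(\bm{\theta})\ \ \bm{a}_{\mathbb{S}_b}(\phi)]=\bm{A}_{\mathbb{S}_b}(\bm{\phi}')\in\mathbb{C}^{N_s\times(K+1)}$ with $\bm{\phi}'=(\theta_1,\dots,\theta_K,\phi)$ a set of $K+1$ distinct angles would have rank $\le K$, contradicting (a). Hence MUSIC yields exactly the $K$ true peaks.

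\emph{Necessity.} I argue the contrapositive: if (a) or (b) fails, MUSIC fails for some admissible $\bm{\theta}$. If (b) fails, take $\bm{\theta}$ to be the offending set of $K$ distinct angles; then $\rank(\mf{Y})\le\rank(\bm{A}_{\mathbb{S}_c}(\bm{\theta}))<K$, so the estimated model order is below $K$ and MUSIC cannot resolve $K$ angles --- (i) is violated. If (b) holds but (a) fails, let $\psi_1,\dots,\psi_{K+1}$ be distinct with $\rank(\bm{A}_{\mathbb{S}_b}(\bm{\psi}))\le K$. If already $\rank(\bm{A}_{\mathbb{S}_b}(\psi_1,\dots,\psi_K))<K$, set $\bm{\theta}=(\psi_1,\dots,\psi_K)$ and conclude as in the previous case. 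Otherwise $\bm{A}_{\mathbb{S}_b}(\psi_1,\dots,\psi_K)$ has full rank $K$ while $\bm{a}_{\mathbb{S}_b}(\psi_{K+1})$ lies in its column span; set $\bm{\theta}=(\psi_1,\dots,\psi_K)$ and pick any $\bm{x}$ with nonzero entries. Because (b) holds, $\rank(\bm{A}_{\mathbb{S}_c}(\bm{\theta}))=K$, so by the sufficiency computation $\rank(\mf{Y})=K$ and $\mathcal{R}(\mf{Y})=\mathcal{R}(\bm{A}_{\mathbb{S}_b}(\bm{\theta}))$; but then $\bm{a}_{\mathbb{S}_b}(\psi_{K+1})\in\mathcal{R}(\mf{Y})$ with $\psi_{K+1}\notin\{\theta_k\}$ is a false peak. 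This exhausts all cases. (The hypothesis $K<\min(N_s,L+1)$ is used only in that it is exactly what makes the full-rank requirements in (a) and (b) dimensionally attainable.)

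\emph{Main obstacle.} The rank bookkeeping and the sufficiency direction are routine. The two delicate points are: (1) pinning down the equivalence in the first paragraph rigorously --- in particular justifying that a rank-deficient $\mf{Y}$ genuinely precludes unambiguous resolution of $K$ sources regardless of how MUSIC selects the model order; and (2) the false-peak construction when (a) fails in the ``generic'' way, which cannot be carried out using (a) alone but must borrow full column rank of $\bm{A}_{\mathbb{S}_c}(\bm{\theta})$ from (b). Managing this logical coupling between (a) and (b), rather than treating the two conditions independently, is where the argument needs the most care.
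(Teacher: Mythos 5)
Your sufficiency argument and your treatment of condition (a)'s necessity (the false-peak construction using full rank of $\bm{A}_{\mathbb{S}_c}(\bm{\theta})$ borrowed from (b)) match the paper's proof. The genuine gap is exactly the point you flag as ``main obstacle (1)'' and then do not close: the only-if half of your reduction lemma, i.e.\ the claim that $\rank(\mathbf{Y})<K$ by itself precludes unambiguous resolution. Under the operational definition used in the paper (and implicitly by you), MUSIC succeeds on a configuration $\bm{\theta}$ iff the noise-subspace projection of $\mathbf{a}_{\mathbb{S}_b}(\phi)$ vanishes exactly at $\phi\in\{\theta_k\}_{k=1}^K$, where the noise subspace is the orthogonal complement of $\mathcal{R}(\mathbf{Y})$. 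If $\rank(\mathbf{Y})=K'<K$ but all $K$ true steering vectors happen to lie in $\mathcal{R}(\mathbf{Y})$ and no spurious one does, then (M1) and (M2) both hold and the pseudo-spectrum shows exactly the $K$ true peaks --- there is no ``estimated model order'' mechanism in this criterion that forces failure. So your step ``(b) fails $\Rightarrow$ model order below $K$ $\Rightarrow$ MUSIC cannot resolve $K$ angles'' (and the analogous branch when (a) fails through rank deficiency of $\bm{A}_{\mathbb{S}_b}$) is an assertion, not a proof, and it is not even true configuration-by-configuration.

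The paper closes this hole with an idea your proposal lacks: it exploits the quantifier ``any set of distinct angles.'' Assuming $\rank(\mathbf{Y})=K'<K$, it picks $K'$ of the true angles whose steering vectors $\{\mathbf{a}_{\mathbb{S}_b}(\bar{\theta}_k)\}_{k=1}^{K'}$ are linearly independent (hence span $\mathcal{R}(\mathbf{A}_{\mathbb{S}_b}(\bm{\theta}))$), and considers a \emph{different} source configuration with only these $K'$ targets. The corresponding smoothed matrix $\bar{\mathbf{Y}}$ has the same column space, so MUSIC applied to it exhibits peaks at all $K$ original angles --- $K-K'$ of which are now false peaks --- contradicting identifiability of that $K'$-source scene. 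This cross-configuration argument is what makes $\rank(\mathbf{Y})=K$ (and hence condition (b), via $\rank(\mathbf{Y})\le\rank(\bm{A}_{\mathbb{S}_c}(\bm{\theta}))$) genuinely necessary; without it, your necessity direction does not go through. Incorporating this construction (and stating explicitly that identifiability is assumed for all admissible configurations, including those with fewer than $K$ sources) would complete your proof along essentially the paper's lines.
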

\begin{proof}
    Let the singular value decomposition of $\mathbf{Y}$ be $\mathbf{Y}=\mathbf{U\Sigma V}^H$ \rd{where $\mathbf{\Sigma}$ is a diagonal matrix containing the singular values of $\mathbf{Y}$ in descending order.} The singular vectors are partitioned \rd{according to the number of non-zero singular values (also equal to rank($\mathbf{Y}$)) $\hat{K}\leq K$} as $\mathbf{U}=[\mathbf{U}_s,\mathbf{U}_n]$, where $\mathbf{U}_s\in\mathbb{C}^{N_s\times \rd{\hat{K}}}$ and $\mathbf{U}_n\in\mathbb{C}^{N_s\times(N_s-\rd{\hat{K}})}$. \rd{MUSIC applied on $\mathbf{Y}$ is said to unambiguously identify any set of $K< \pp{\min(N_s,L+1)}$ distinct sources $\{\theta_k\}_{k=1}^{K}$ if} the following two conditions are satisfied: (M1) $\mathbf{a}_{\mathbb{S}_b}(\theta_k)^H\mathbf{U}_n\mathbf{U}_n^{H}\mathbf{a}_{\mathbb{S}_b}(\theta_k)=0$ for all $1\leq k \leq K$, and 
    (M2) $\mathbf{a}_{\mathbb{S}_b}(\phi)^H\mathbf{U}_n\mathbf{U}_n^{H}\mathbf{a}_{\mathbb{S}_b}(\phi)\neq 0$
    for any $\phi \not \in \{\theta_k\}_{k=1}^{K}$.

    We begin by proving the sufficiency of ($a$) and ($b$) for identifiability. Suppose ($a$) and ($b$) hold. From \eqref{eqn:ss_op} we have $\mathcal{R}(\mathbf{Y})=\mathcal{R}(\bm{A}_{\mathbb{S}_b}(\bm{\theta})\text{diag}(\bm{x})\bm{A}_{\mathbb{S}_c}^{T}(\bm{\theta}))$. Due to assumption ($b$), $\text{diag}(\bm{x})\bm{A}_{\mathbb{S}_c}^{T}(\bm{\theta})$ has full row-rank and hence $\mathcal{R}(\mathbf{Y})=\mathcal{R}(\bm{A}_{\mathbb{S}_b}(\bm{\theta}))$. 
    %
    \pp{Due to assumption (a) rank($\bm{A}_{\mathbb{S}_b}(\bm{\theta}))=K$ and hence $\hat{K}=K$}. Now, for every $\{\theta_k\}_{k=1}^{K}$, we have $\mathbf{a}_{\rd{\mathbb{S}_b}}(\theta_k)\in \mathcal{R}(\mathbf{U}_s)$ and therefore $\mathbf{a}_{\rd{\mathbb{S}_b}}(\theta_k)^H\mathbf{U}_n\mathbf{U}_n^{H}\mathbf{a}_{\rd{\mathbb{S}_b}}(\theta_k)=0$. \pp{The fact} \rd{that $\{\theta_k\}_{k=1}^K$ are the only solutions follows by contradiction: } Suppose there exists $\phi \not\in \{\theta_k\}_{k=1}^K$ \rd{such that} $\mathbf{a}_{\rd{\mathbb{S}_b}}(\phi)^H\mathbf{U}_n\mathbf{U}_n^{H}\mathbf{a}_{\rd{\mathbb{S}_b}}(\phi)=0$. This implies that $\mathbf{U}_n^{H}\mathbf{a}_{\rd{\mathbb{S}_b}}(\phi)=0 \Rightarrow \mathbf{a}_{\rd{\mathbb{S}_b}}(\phi)\in \mathcal{R}(\mathbf{U}_s)=\mathcal{R}(\mathbf{A}_{\mathbb{S}_b}(\bm{\theta}))$. However, this leads to a contradiction since $[\mathbf{A}_{\mathbb{S}_b}(\bm{\theta}),\mathbf{a}_{\rd{\mathbb{S}_b}}(\phi)]$ cannot be rank deficient due to assumption ($a$). 
    Thus, if ($a$) and ($b$) hold, MUSIC applied on $\mathbf{Y}$ can identify any set of $K$ distinct sources unambiguously.

     Next we prove the necessity of ($a$) and ($b$) for identifiability. 
%
    \pp{We consider an arbitrary set of $K$ source angles $\{\theta_k\}_{k=1}^{K}$.} 
    \pp{We always have} $\mathcal{R}(\mathbf{Y})\subseteq \mathcal{R}(\mathbf{A}_{\mathbb{S}_b}(\bm{\theta}))$. However, we establish that if \rd{conditions} (M1) and (M2) hold, then it is necessary to have  
    $\mathcal{R}(\mathbf{Y})= \mathcal{R}(\mathbf{A}_{\mathbb{S}_b}(\bm{\theta}))$ and \pp{$\hat{K}=K$}. The proof proceeds via contradiction. Suppose $\mathcal{R}(\mathbf{Y})\subset \mathcal{R}(\mathbf{A}_{\mathbb{S}_b}(\bm{\theta}))$, i.e., there exists some source direction $\theta_k$ such that $\pp{\mathbf{a}_{\mathbb{S}_b}(\theta_k) \not\in \mathcal{R}(\mathbf{Y})=\mathcal{R}(\mathbf{U}_s)}$. This implies that the projection of $\mathbf{a}_{\rd{\mathbb{S}_b}}(\theta_k)$ onto $\mathcal{R}(\mathbf{U}_n)$---the orthogonal complement of $\mathcal{R}(\mathbf{Y})$---is non-zero. Hence, we have $\mathbf{a}_{\rd{\mathbb{S}_b}}(\theta_k)^H\mathbf{U}_n\mathbf{U}_n^H\mathbf{a}_{\rd{\mathbb{S}_b}}(\theta_k)\neq 0$ which contradicts (M1). Thus, $\mathcal{R}(\mathbf{Y})= \mathcal{R}(\mathbf{A}_{\mathbb{S}_b}(\bm{\theta}))$.

    Next, we establish that for unambiguous identification of any set of $K<\pp{\min(N_s,L+1)}$ sources it is necessary that rank$(\mathbf{Y})=K$. Suppose rank$(\mathbf{Y})=K'<K$. \rd{Since $\mathcal{R}(\mathbf{Y})= \mathcal{R}(\mathbf{A}_{\mathbb{S}_b}(\bm{\theta}))$,} rank$(\mathbf{A}_{\mathbb{S}_b}(\bm{\theta}))=\text{rank}(\mathbf{Y})=K'<K$. This implies that there exists $\{\bar{\theta}_k\}_{k=1}^{K'}\subset \{\theta_k\}_{k=1}^{K}$ \pp{such that $\{\mathbf{a}_{\mathbb{S}_b}(\bar{\theta}_k)\}_{k=1}^{K'}$ are linearly independent}. 
    \pp{Now consider a different source configuration where the DOAs are given by
    $\bm{\bar{\theta}}=[\bar{\theta}_1,\bar{\theta}_2,\ldots,\bar{\theta}_{K'}]^T$ and the corresponding measurement vector is $\bm{\bar{y}}=\mathbf{A}_{\mathbb{S}}(\bm{\bar{\theta}})\bm{\bar{x}}$ for some $\bm{\bar{x}}\in \mathbb{C}^{K'}$ (with $\bar{x}_i\neq 0$ for all $1\leq i \leq K'$)
    . The corresponding spatially smoothed matrix is given below:}
    \begin{align*}    \pp{\mathbf{\bar{Y}}=\mathbf{A}_{\mathbb{S}_b}(\bm{\bar{\theta}})\text{diag}(\bm{\bar{x}})\mathbf{A}_{\mathbb{S}_c}(\bm{\bar{\theta}})^{T}}.
    \end{align*}
    Identifiability of all $K<\pp{\min(N_s,L+1)}$ source configurations implies that these $K'<K<\pp{\min(N_s,L+1)}$ DOAs $\bm{\bar{\theta}}$ should also be identifiable by applying MUSIC on $\mathbf{\bar{Y}}$.
    \pp{However, since $\mathcal{R}(\mathbf{Y})=\mathcal{R}(\mathbf{\bar{Y}})=\mathcal{R}(\mathbf{A}_{\mathbb{S}_b}(\bm{\theta}))=\mathcal{R}(\mathbf{A}_{\mathbb{S}_b}(\bm{\bar{\theta}}))$ MUSIC algorithm applied to $\mathbf{\bar{Y}}$ will produce $K>K'$ peaks and $\bar{\theta}_1,\bar{\theta}_2,\cdots,\bar{\theta}_{K'}$ will not be identifiable.}
    This contradicts the fact that all $K<\pp{\min(N_s,L+1)}$ source configurations can be unambiguously \rd{identified by applying MUSIC on $\bm{Y}$.} Therefore, rank$(\mathbf{Y})=\text{rank}(\mathbf{A}_{\mathbb{S}_b}(\bm{\theta}))=K$ is necessary. \pp{This implies condition ($b$) since rank$(\mathbf{Y})\leq \text{rank}(\mathbf{A}_{\mathbb{S}_c}(\bm{\theta}))\leq K$.}
    By combining $\text{rank}(\mathbf{A}_{\mathbb{S}_b}(\bm{\theta}))=K$ and condition (M2), we can establish that for any $\phi\not\in \{\theta_k\}_{k=1}^K$ we must have $\mathbf{a}_{\mathbb{S}_b}(\phi)\not\in \mathcal{R}(\mathbf{A}_{\mathbb{S}_b}(\bm{\theta}))$ establishing the necessity of ($a$).  
\end{proof}

\begin{remark}
\cref{theorem:1} \bl{reveals} two potential ambiguities in spatial smoothing using sparse arrays. First, when ($a$) does not hold, we have rank$(\bm{A}_{\mathbb{S}_b}(\bm{\theta}))\!<\!K$ or there exists $\varphi\!\notin\!\{\theta_i\}_{i=1}^K$ such that \pp{rank$([\bm{A}_{\mathbb{S}_b}(\bm{\theta}), \bm{a}_{\mathbb{S}_b}(\varphi)])\!\leq \!K$}, where $\bm{a}_{\mathbb{S}_b}(\varphi)$ is the manifold vector of $\mathbb{S}_b$ for angle $\varphi$. Second, when ($b$) does not hold, we have 
$\mathcal{R}(\mathbf{Y})\neq\mathcal{R}(\bm{A}_{\mathbb{S}_b}(\bm{\theta}))$. 
In both cases, MUSIC \pp{fails} to yield exactly $K$ peaks that corresponds to the true $K$ DOAs.
\end{remark}
\subsection{\bl{Non-necessity of ULA segments}}

\bl{Using \cref{theorem:1}, it can be shown that $\mathbb{S}_b$ with a ULA segment of length $K+1$ and $\mathbb{S}_c$ with a ULA segment of length $K$ are sufficient for guaranteeing identifiability of $K$ targets \pp{due to the presence of a Vandermonde submatrix in $\mathbf{A}_{\mathbb{S}_b}(\bm{\theta})$ and $\mathbf{A}_{\mathbb{S}_c}(\bm{\theta})$}. Hence, choosing $\mathbb{S}_b$ and $\mathbb{S}_c$ as sparse arrays (such as nested arrays \cite{sarangi2022single}) containing contiguous segments of size $K+1$ and $K$ is sufficient for satisfying conditions ($a$) and ($b$), respectively. However, \pp{the presence of an ULA segment is not always necessary}: it has been shown in \cite{chen2021rank} that an $N'$-sensor array $\mathbb{D}$ \emph{need not} contain a ULA segment of \pp{length $K'+1$ to identify 
$K'\ff{<N'}$ 
sources when $N'=3$ or $4$}.} 

\rd{These results can be leveraged in \cref{theorem:1} to guarantee identifiability of $K=2$ targets when applying MUSIC on the spatially smoothed matrix $\bm{Y}$ using sparse $\mathbb{S}_b$ and $\mathbb{S}_c$---\pp{neither containing ULA segments of appropriate length $K$}}. 
{Specifically, by \cite[Theorem~2]{chen2021rank}, manifold matrix $\bm{A}_{\mathbb{D}}(\bm{\theta})\in\mathbb{C}^{3\times 2}$ of array $\mathbb{D}=\{0,d_1,d_2\}\subset\mathbb{N}$ has full column rank for all distinct $\theta_1,\theta_2\in[-\frac{\pi}{2},\frac{\pi}{2})$ if and only if $d_1,d_2$ are coprime. Hence, condition ($b$) in \cref{theorem:1} is satisfied for a given $L\geq 3$ by selecting $d_1,d_2$ to be coprime and setting $\mathbb{S}_c=\mathbb{D}\cup \mathbb{X}_1$ for any $\mathbb{X}_1$ such that $|\mathbb{D}\cup \mathbb{X}_1|=L$. This follows from the fact that $\bm{A}_{\mathbb{S}_c}(\bm{\theta})\in\mathbb{C}^{L\times 2}$ contains a $3\times 2$ block $\bm{A}_{\mathbb{D}}(\bm{\theta})$ which by \cite[Theorem~2]{chen2021rank} has full Kruskal rank for the above choice of $\mathbb{D}$. A similar, albeit slightly more invloved argument can be made for satisfying condition ($a$) in \cref{theorem:1} by embedding manifold matrix $\bm{A}_{\mathbb{G}}(\theta)\in\mathbb{C}^{4\times 2}$ of array $\mathbb{G}=\{0,g_1,g_2,g_3\}$ into $\bm{A}_{\mathbb{S}_{\ff{b}}}(\bm{\theta})\in\mathbb{C}^{N_s\times 2}$ for $g_1,g_2,g_3$ satisfying certain coprimality properties---see \cite[Theorem~3]{chen2021rank} for details. \pp{For example, \cref{theorem:1} combined with \cite[Theorems 2 and 3]{chen2021rank} suggests that spatial smoothing identifies $K=2$ targets using $\mathbb{S}_c=\{0,4,9\}$ and $\mathbb{S}_b=\{0,3,5,7\}$---neither containing ULA segments with unit spacing.}

\subsection{\pp{Harnessing holes in} sparse arrays for spatial smoothing: Decomposition considerations}\label{sec:decomposition_considerations}
It is well known that a length $2M-1$ ULA can resolve $M$ targets unambiguously via forward spatial smoothing in absence of noise by choosing $\mathbb{S}_b$ and $\mathbb{S}_c$ to be length-$M$ ULAs. However, trading off identifiability for resolution may be of interest for example in automotive radar, where the number of targets in a given range-Doppler bin can be small, yet high angular resolution is required \cite{sun2020mimo}. A critical factor that affects the spatial resolution of such a spatial-smoothing strategy (in addition to SNR) is the aperture of $\mathbb{S}_b$, denoted by $N_b:=\max(\mathbb{S}_b)-\min(\mathbb{S}_b)$. For a given number of virtual elements $N$, if we desire a larger $N_b$, the number of effective ``snapshots" $L$ is limited and vice-versa.
\newcommand{\figw}{9}
\newcommand{\fe}{1}
\newcommand{\figh}{2.0}
\newcommand{\ef}{2}
\newcommand{\Lmax}{50} 
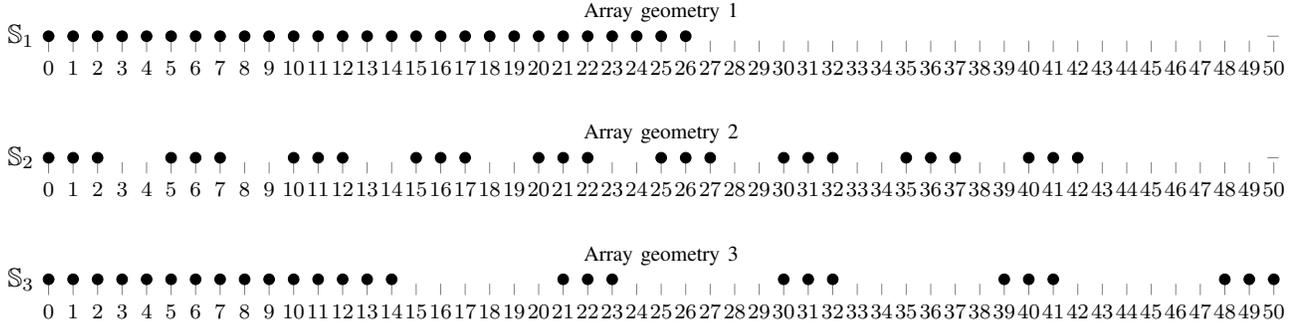
\begin{figure*}[ht]
\begin{tikzpicture}
\begin{axis}
[width=\figw*(1 cm+(\ef cm-1 cm)*\fe) ,height=\figh cm,xmin=-0.2,xmax=\Lmax+0.2,xtick={0,1,...,\Lmax},ytick={0},yticklabels={\pp{$\mathbb{S}_1$}},title style={yshift=0 pt},xticklabel shift = 0 pt,xtick pos=bottom,axis line style={draw=none},title={\footnotesize Array geometry 1},title style={yshift=-10 pt}, xticklabel style={font=\footnotesize}]
\addplot[only marks]
coordinates{(0,0)(1,0)(2,0)(3,0)(4,0)(5,0)(6,0)(7,0)(8,0)(9,0)(10,0)(11,0)(12,0)(13,0)(14,0)(15,0)(16,0)(17,0)(18,0)(19,0)(20,0)(21,0)(22,0)(23,0)(24,0)(25,0)(26,0)};
\end{axis}
\end{tikzpicture}
\\ \\
\begin{tikzpicture}
\begin{axis}
[width=\figw*(1 cm+(\ef cm-1 cm)*\fe) ,height=\figh cm,xmin=-0.2,xmax=\Lmax+0.2,xtick={0,1,...,\Lmax},ytick={0},yticklabels={\pp{$\mathbb{S}_2$}},title style={yshift=0 pt},xticklabel shift = 0 pt,xtick pos=bottom,axis line style={draw=none},title={\footnotesize Array geometry 2},title style={yshift=-10 pt}, xticklabel style={font=\footnotesize}]
\addplot[only marks]
coordinates{(0,0)(1,0)(2,0)(5,0)(6,0)(7,0)(10,0)(11,0)(12,0)(15,0)(16,0)(17,0)(20,0)(21,0)(22,0)(25,0)(26,0)(27,0)(30,0)(31,0)(32,0)(35,0)(36,0)(37,0)(40,0)(41,0)(42,0)};
\end{axis}
\end{tikzpicture}
\\ \\
\begin{tikzpicture}
\begin{axis}
[width=\figw*(1 cm+(\ef cm-1 cm)*\fe) ,height=\figh cm,xmin=-0.2,xmax=\Lmax+0.2,xtick={0,1,2,...,\Lmax},ytick={0},yticklabels={\pp{$\mathbb{S}_3$}},title style={yshift=0 pt},xticklabel shift = 0 pt,xtick pos=bottom,axis line style={draw=none},title={\footnotesize Array geometry 3},title style={yshift=-10 pt}, xticklabel style={font=\footnotesize}]
\addplot[only marks]
coordinates{(0,0)(1,0)(2,0)(3,0)(4,0)(5,0)(6,0)(7,0)(8,0)(9,0)(10,0)(11,0)(12,0)(13,0)(14,0)(21,0)(22,0)(23,0)(30,0)(31,0)(32,0)(39,0)(40,0)(41,0)(48,0)(49,0)(50,0)};
\end{axis}
\end{tikzpicture}
\caption{Three examples of array geometry with $M_t=3$ antennas and $M_r=12$ antennas: achieve larger aperture by introducing holes}\label{fig:arrays}
\end{figure*}
This trade-off between aperture and identifiability continues to hold for spatial smoothing using sparse arrays. For example, given $\mathbb{S}=\{0,1,3,4,5,6,7,8\}$, we have at least the following decomposition for $N_s=4, L=2$:
\begin{align}
    &\mathbb{S}_{b_1}=\{0,3,5,7\}, \mathbb{S}_{c_1}=\{0,1\};\label{eq:Sb1}
\end{align}
\bl{and for $N_s=4, L=3$:}
\begin{align}
    &\mathbb{S}_{b_2}=\{0,1,3,4\}, \mathbb{S}_{c_2}=\{0,3,4\};\label{eq:Sb3}\\
    &\mathbb{S}_{b_3}=\{0,3,4,5\}, \mathbb{S}_{c_3}=\{0,1,3\}.\label{eq:Sb4}
\end{align} 
Using \cref{theorem:1} and \cite[Theorem 3]{chen2021rank}, 
both \labelcref{eq:Sb1,eq:Sb4} can guarantee the identifiability of $K=2$ targets. However, $\mathbb{S}_{b_1}=\{0,3,5,7\}$ has larger aperture than $\mathbb{S}_{b_3}=\{0,3,4,5\}$. Hence, $\mathbb{S}_{b_1}$ may be preferable to $\mathbb{S}_{b_3}$ in terms of resolution. Moreover, even though \eqref{eq:Sb3} \pp{may} not be desirable for spatial smoothing due to the fact $\mathbb{S}_{b_2}=\{0,1,3,4\}$ fails to satisfy condition ($a$) \bl{in \cref{theorem:1}} for $K=2$ according to \cite[Theorem 3]{chen2021rank}, $\mathbb{S}_{b_2}$ can be good candidate for the physical \bl{transmit or receive array if the aperture available for placing the sensors is tightly constrained.}

\section{Simulations}
For the numerical experiments, we consider three different array configurations with the constraint \pp{$M_t= 3$} and \pp{$M_r = 9$}. \cref{fig:arrays} visualizes the virtual sum co-array for each of these arrays, denoted by $\mathbb{S}_1, \mathbb{S}_2$, and $ \mathbb{S}_3$, respectively.

We generate the scattering coefficients $\{x_k\}_{k=1}^K$ with a constant magnitude and phase drawn uniformly from the interval $(0,2\pi]$. The noise is assumed to be complex Gaussian with independent real and imaginary parts and a variance $\sigma^2$ chosen to meet the desired signal-to-noise ratio (SNR) defined as $\text{SNR}=20\log_{10}\frac{\min_k(\lvert x_k\rvert)}{\sigma}$. \cref{fig:res} shows \bl{a realization of the MUSIC pseudospectrum in the case of two different target configurations (with $K=2$)} after performing spatial smoothing on \bl{$\mathbb{S}_i, i=1,2,3$}. The corresponding physical arrays \bl{$\mathbb{S}_{r_i},\mathbb{S}_{t_i}$} and \bl{spatial} smoothing \bl{arrays $\mathbb{S}_{b_i},\mathbb{S}_{c_i}$ (satisfying $\mathbb{S}_{r_i}+\mathbb{S}_{t_i}=\mathbb{S}_{b_i}+\mathbb{S}_{c_i}= \mathbb{S}_i$)} are shown below:
\begin{align*}
\mathbb{S}_1:\enspace&\mathbb{S}_{r_1}=\{3k\}_{k=0}^8,\thinspace\mathbb{S}_{t_1}=\{0,1,2\},\\
&\mathbb{S}_{b_1}=\{k\}_{k=0}^{24},\thinspace\mathbb{S}_{c_1}=\{0,1,2\};\\
\mathbb{S}_2:\enspace&\mathbb{S}_{r_2}=\{5k\}_{k=0}^8,\thinspace\mathbb{S}_{t_{\bl{2}}}=\{0,1,2\},\\
&\mathbb{S}_{b_2}=\bigcup_{n=0}^{6}\{k+5n\}_{k=0}^{2},\thinspace\mathbb{S}_{c_2}=\{0,5,10\};\\
\mathbb{S}_3:\enspace&\mathbb{S}_{r_{\bl{3}}}=\{3k\}_{k=0}^4\bigcup\{21+9k\}_{k=0}^{3},\thinspace\mathbb{S}_{t_{\bl{3}}}=\{0,1,2\},\\
&\mathbb{S}_{b_3}=\{k\}_{k=0}^{12}\bigcup\{21+9k\}_{k=0}^{3},\thinspace\mathbb{S}_{c_3}=\{0,1,2\}.
\end{align*}

The left column of \cref{fig:res} shows that $\mathbb{S}_1$ cannot resolve the DOAs with a separation of $2^{\circ}$ due to its limited aperture, whereas the enhanced aperture of \bl{sparse arrays} $\mathbb{S}_2$ and $\mathbb{S}_3$ allows them to resolve the closely spaced targets despite the presence of holes in \bl{these (virtual) array geometries}. \bl{However,} the right column of \cref{fig:res} shows a target configuration that corresponds to an ambiguity of $\mathbb{S}_2$, which leads to a degradation in its performance despite the increased angular separation ($\approx67^\circ$) between the targets. \bl{In contrast, $\mathbb{S}_3$ provably does not suffer from ambiguities when $K=2$ by \cref{theorem:1}.} 
\begin{figure}[h]
    \includegraphics[width=0.241\textwidth]{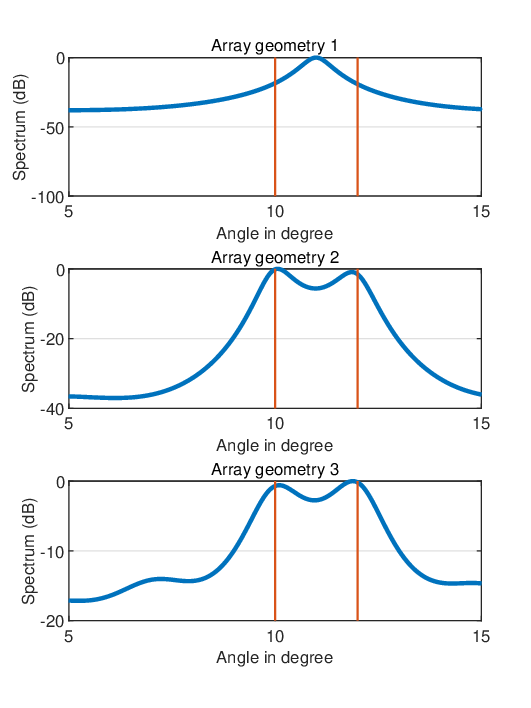}
    \includegraphics[width=0.241\textwidth]{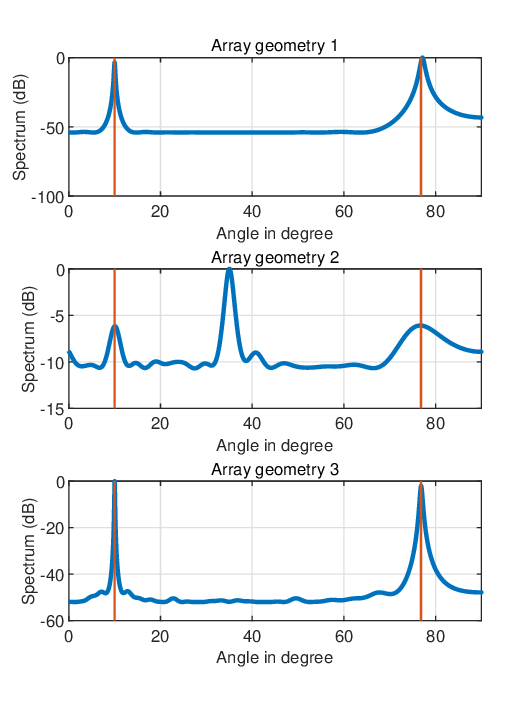}
    \caption{Comparison of direct MUSIC on 3 examples of array geometry with $K = 2$ \bl{targets} located at (left) $\bm{\theta}=[10^{\circ}, 12.00^{\circ}]$ and (right) $\bm{\theta}=[10^{\circ}, 76.82^{\circ}]$ with SNR $ = 20$ dB. \bl{Array} $\mathbb{S}_3$ can not only achieve higher resolution but is also free from ambiguities arising from spatial smoothing.} \label{fig:res}
\end{figure}
\section{Conclusions}
This paper considered leveraging the enhanced aperture of sparse arrays in spatial smoothing towards improving angular resolution in sample-starved applications, such as automotive radar. We showed that sparse arrays \bl{with a suitable shift-invariant structure are} amenable to spatial smoothing. 
Furthermore, we derived necessary and sufficient conditions for spatial smoothing using sparse arrays to identify $K$ targets. 
We also demonstrated that a fixed sparse array geometry can have multiple decompositions and not all of them are equivalent in terms of resolution and identifiability. Simulation results indicate that appropriate sparse array configuration can not only achieve higher resolution but can also be free of ambiguities when using spatial smoothing.

\section*{\pp{Acknowledgements}}
\pp{The authors would like to thank Mehmet Can H\"uc\"umeno\u{g}lu for valuable discussions and suggestions regarding \cref{theorem:1}.}

\bibliographystyle{IEEEtran}
\bibliography{ref}
\end{document}